\newtheorem{theorem}{Theorem}
\newtheorem{corollary}[theorem]{Corollary}
\newtheorem{lemma}[theorem]{Lemma}
\newenvironment{proof}[1][Proof]{\noindent\textbf{#1.} }{\ \rule{0.5em}{0.5em}}
\newcommand{\MyParagraph}[1]{\medskip \noindent {\bf #1}}	
\begin{document}

\title{An Information-Theoretic Proof of the Streaming Switching Lemma for Symmetric Encryption}

\author{\IEEEauthorblockN{Ido Shahaf}
\IEEEauthorblockA{Hebrew University of Jerusalem\\ ido.shahaf@cs.huji.ac.il}
\and
\IEEEauthorblockN{Or Ordentlich}
\IEEEauthorblockA{Hebrew University of Jerusalem\\ or.ordentlich@mail.huji.ac.il}
\and
\IEEEauthorblockN{Gil Segev}
\IEEEauthorblockA{Hebrew University of Jerusalem\\ segev@cs.huji.ac.il}
\IEEEoverridecommandlockouts
\IEEEcompsocitemizethanks{
\IEEEcompsocthanksitem
Ido Shahaf and Gil Segev were supported by the European Union's Horizon 2020 Framework Program (H2020) via an ERC Grant (Grant No.\ 714253), and  Ido Shahaf was additionally supported by the Clore Israel Foundation via the Clore Scholars Programme. Or Ordentlich was supported by the ISF under Grant 1791/17.
}
}

\maketitle

\begin{abstract} 
	Motivated by a fundamental paradigm in cryptography, we consider a recent variant of the classic problem of bounding the distinguishing advantage between  a random function and a random permutation. Specifically, we consider the problem of deciding whether a sequence of $q$ values was sampled uniformly  {\em with} or {\em without} replacement from $[N]$, where the decision is made by a streaming algorithm restricted to using at most $s$ bits of internal memory.
	In this work, the distinguishing advantage of such an algorithm is measured by the KL divergence between the distributions of its output as induced under the two cases.
	We show that for any $s=\Omega(\log N)$ the distinguishing advantage is upper bounded by $O(q \cdot s / N)$, and even by $O(q \cdot s / N \log N)$ when $q \leq N^{1 - \epsilon}$ for any constant $\epsilon > 0$ where it is nearly tight with respect to the KL divergence.
\end{abstract}

\section{Introduction}\label{Sec:Introduction}

A fundamental paradigm in the design and analysis of symmetric encryption schemes is the following two-step process: (1) Design a symmetric encryption scheme assuming the availability of a uniformly-random \textit{permutation}; (2) Analyze the security of the scheme assuming that the permutation is switched to a uniformly-random \textit{function}. 

Step (1) relies on the widely-believed existence of pseudorandom permutations (see, for example, \cite{Goldreich01,KatzL14}), which are efficiently-computable and efficiently-invertible keyed permutations $\{ \Pi_{\sf k} \}_{{\sf k} \in \mathcal{K}}$ over $\{0,1\}^n$ that are computationally indistinguishable from a uniformly-random permutation in a standard cryptographic sense, where $\mathcal{K}$ is the set of all possible keys ${\sf k}$. Pseudorandom permutations are realized via a variety of known practical constructions, such as the well-studied and standardized Advanced Encryption Standard ${\sf AES} = \{ {\sf AES}_{\sf k} \}_{{\sf k} \in \mathcal{K}}$ for which $n = 128$.

Step (2) relies on the fact that a uniformly-random function can serve as a perfectly-secure one-time pad for the encryption of an exponentially-large number of messages. For example, assuming that two parties secretly share a uniformly-random permutation $\Pi$ over $\{0,1\}^n$ (this would correspond to actually sharing a key ${\sf k}$ for a pseudorandom permutation), they can use the widely-deployed counter mode for the encryption of multiple messages, and encrypt their $i$th message $m_i \in \{0,1\}^n$ as the pair $c_i = (i, \Pi(i) \oplus m_i)$. Modifying the scheme by replacing its random permutation $\Pi$ with a random function $F : \{0,1\}^n \rightarrow \{0,1\}^n$ enables to argue that an attacker observing a sequence of $q \leq 2^n = N$ ciphertexts $c_1, \ldots, c_q$ obtains no information on their corresponding messages $m_1, \ldots, m_q$. Note, however, that these ciphertexts result from the modified scheme that uses the function $F$, and not from the original one that uses the permutation $\Pi$. Thus, it must be argued that the security of the modified scheme provides a meaningful guarantee for the security of the original one. 

\MyParagraph{The switching lemma.} The security of the modified scheme and that of the original scheme are tied together via a simple argument, commonly referred to as the ``switching lemma''. This lemma captures the  advantage of distinguishing between a random permutation and a random function. For an algorithm (an attacker) that observes $q$ ciphertexts, this translates to upper bounding its advantage in distinguishing a sequence  of $q$ values that are sampled uniformly {\em with replacement} from $\{0,1\}^n = [N]$ (corresponding to the values $F(1), \ldots, F(q)$ in the case of a random function $F$) from a sequence of $q$  values that are sampled  uniformly {\em without replacement} from $[N]$ (corresponding to the values $\Pi(1), \ldots, \Pi(q)$ in the case of a random permutation $\Pi$). The distinguishing advantage of such an algorithm is defined by the dissimilarity between the distribution of its output as induced under the two cases.  Note that the total variation distance between these two distributions is  $\Theta(q^2 / N)$, and this serves as a tight bound on the distinguishing advantage when no restrictions are placed on the distinguisher. 

This implies, in particular, that encryption in the widely-deployed counter mode cannot be used when the number $q$ is approaching $ \sqrt{N}$ messages. In fact, the switching lemma is applicable, and places rather similar  bounds on the number of encrypted messages, not only for symmetric encryption in the above-described counter mode but also for other fundamental modes of encryption. We refer the reader to the work of Jaeger and Tessaro \cite{JaegerT19} for an in-depth discussion of the cryptographic applications on the switching lemma.

\MyParagraph{The streaming switching lemma.} As discussed above, the bound provided by the switching lemma is tight when no restrictions are placed on the distinguisher. Specifically, the following simple algorithm achieves the bound: When given a sequence of $q$ values as input, the algorithm outputs $1$ if there is some value that appears more than once (i.e., if a ``collision'' exists), and outputs $0$ otherwise. Note that when given a sequence of values that are sampled uniformly with replacement this algorithm outputs $1$ with probability $\Theta(q^2 / N)$, and when given a sequence of values that are sampled uniformly  without replacement  this algorithm always outputs $0$. However, a significant drawback of this algorithm is that it needs an internal memory of size  $q \cdot \log N$ bits for storing the entire sequence in order to identify whether or not a collision exists.

This observation motivated Jaeger and Tessaro \cite{JaegerT19} to refine the framework of the switching lemma by restricting the amount of internal memory used by the distinguisher. That is, they analyzed the advantage of distinguishing the above two distributions where: (1) the $q$ values are provided one by one in a streaming manner, and (2) the internal memory of the distinguisher is restricted to at most $s$ bits. The most interesting regime is where there is a noticeable gap between $s$ and $q \cdot \log N$, which is motivated by the fact that large amounts of data cannot always be stored in their entirety.

\MyParagraph{Known bounds.} Jaeger and Tessaro proved a conditional upper bound on the distinguishing advantage of any streaming algorithm $\mathcal{A}$ that uses at most $s$ bits of internal memory. Specifically, they introduced a combinatorial conjecture regarding certain hypergraphs, and showed that based on their conjecture the advantage of any such distinguisher is at most $O(q \cdot s / N)$, when  measured as the KL divergence between the output distributions of the memory-bounded streaming algorithm $\mathcal{A}$ under the two cases. Applying Pinker's inequality, this implies an upper bound of $O(\sqrt{q \cdot s / N})$ when measured via the total variation distance, which is more standard for cryptographic applications.

In a follow-up work, Dinur \cite{Dinur19} proved an unconditional upper bound of $O((q \log q) \cdot s/ N)	$ on the distinguishing advantage of any such algorithm, when measured as the total variation distance between the output distributions of the memory-bounded streaming algorithm under the two cases. Note that this should be compared to the upper bound $O(\sqrt{q \cdot s / N})$ on the total variation distance obtained by applying Pinsker's inequality to the result of Jaeger and Tessaro.

Dinur's result is based on reducing the task of distinguishing between these two distributions via a memory-bounded algorithm to constructing communication-efficient protocols for the two-party set-disjointness problem. Three decades of extensive research on the communication complexity of this canonical problem (e.g., \cite{Bar-YossefJKS04, KalyanasundaramS92, Razborov92}) have recently led to new lower bounds \cite{Goos016} on which Dinur relied via his reduction.

\MyParagraph{Our contributions.} We present an information-theoretic and unconditional  proof  showing that the distinguishing advantage of any streaming algorithm that uses at most $s = \Omega(\log N)$ bits of internal memory  is at most $O(q \cdot s / N)$,  measured via KL divergence as in the work of Jaeger and Tessaro \cite{JaegerT19}. When $q \leq N^{1 - \epsilon}$ for any constant $\epsilon > 0$, we obtain an improved upper bound of $O(q \cdot s / N \log N)$ which is asymptotically tight with respect to the KL divergence.

Moreover, we prove our results within a more refined framework that considers the {\em accumulated} memory usage of streaming algorithms throughout their computation, and not only their worst-case memory usage. This shows that any non-negligible advantage must be obtained by using a substantial amount of internal memory on average throughout the computation, and not only in the worst case.

\section{Setup and Main Results}\label{Sec:Setup}

\MyParagraph{Notation.} All logarithms in this paper are to the natural base unless denoted otherwise in a subscript. For two probability distributions $P_X$ and $Q_X$ on a common discrete alphabet $\mathcal{X}$, where $P_X$ is absolutely continuous with respect to $Q_X$, the KL-divergence is defined as $D_{\mathsf{KL}}(P_X\|Q_X)=\sum_{x\in\mathcal{X}} P_X(x)\log\frac{P_X(x)}{Q_X(x)}$. For probability distributions $P_{XY}=P_X P_{Y|X}$ and $Q_{XY}=Q_X Q_{Y|X}$ on a common discrete alphabet $\mathcal{X}\times\mathcal{Y}$, where $P_{XY}$ is absolutely continuous with respect to $Q_{XY}$,   we further define the conditional divergence as $D_{\mathsf{KL}}(P_{Y|X}\|Q_{Y|X}|P_X)=\sum_{x\in\mathcal{X}} P_{X}(x)D_{\mathsf{KL}}(P_{Y|X=x}\|Q_{Y|X=x})$. The mutual information between $X$ and $Y$ with respect to the probability distribution $P_{XY}$ is $I(X;Y)=D_{\mathsf{KL}}(P_{Y|X}\|P_Y|P_X)=H(Y)-H(Y|X)$, where $H(Y)=\sum_{y\in\mathcal{Y}} P_{Y}(y)\log\frac{1}{P_Y(y)}$ and $H(Y|X)=\sum_{x\in\mathcal{X},y\in\mathcal{Y}} P_{XY}(x,y)\log\frac{1}{P_{Y|X}(y|x)}$.

\MyParagraph{Setup.} For stating our results we briefly describe the notion of memory-bounded streaming indistinguishability, introduced by Jaeger and Tessaro \cite{JaegerT19}, as well as our refinement that considers accumulated memory usage. For an algorithm $\mathcal{A}$ and a sequence $x=(x_1,\dots,x_q) \in [N]^q$, $q<N$, the streaming computation of $\mathcal{A}$ on $x$ is defined via the following process:
\begin{itemize}
	\item Set $\sigma_0=\bot$, where $\bot$ is the empty string.
	\item For $i=1,\dots,q$:
	\begin{itemize}
		\item Let $\sigma_i\leftarrow\mathcal{A}(i,\sigma_{i-1},x_i)$.
	\end{itemize}
	\item Output $\sigma_q$.
\end{itemize}

We abuse notation and denote the output of this computation by $\mathcal{A}(x)$. Following Jaeger and Tessaro, we say that an algorithm $\mathcal{A}$ is $s$-memory-bounded if for every input $x \in [N]^q$ and for every $i\in[q]$ it holds that $|\sigma_i|= s$, where $|\sigma_i|$ is the bit length of the internal state $\sigma_i$. For our purpose of considering accumulated memory usage, we naturally extend this notion to that of an $(s_1, \ldots, s_q)$-memory-bounded algorithm, where for every input $x \in [N]^q$ and for every $i\in[q]$ it holds that $|\sigma_i|= s_i$. From this point, and without loss of generality, we assume that for any $(s_1, \ldots, s_q)$-memory-bounded algorithm it holds that $s_{i+1}\le s_i+ \lceil \log_2 N \rceil$ for all $i\in[q-1]$ and that it holds that $s_1\le \lceil \log_2 N \rceil$.\footnote{For any sequence $s_1,\dots,s_q$, we may recursively define $s'_1,\dots,s'_q$ by $s'_1=\min\{s_1, \lceil \log_2 N \rceil\}$ and $s'_{i+1}=\min\{s_{i+1}, s'_i + \lceil \log_2 N \rceil\}$. Then, any $(s_1,\dots,s_q)$-memory-bounded algorithm $\mathcal{A}$ with internal states $\sigma_1, \ldots, \sigma_q$ can be transformed into an $(s'_1,\dots,s'_q)$-memory-bounded algorithm $\mathcal{A}'$ with internal states $\sigma'_1, \ldots, \sigma'_q$ by defining $\sigma'_{i+1} = \sigma_{i+1}$ if $s'_{i+1}=s_{i+1}$ and defining $\sigma'_{i+1} = (\sigma'_{i},x_{i+1})$ otherwise, where $(x_1, \ldots, x_q)$ is the input sequence (i.e., $x_{i+1}$ can always be stored explicitly together with the previous state $\sigma_i$ instead of updating the state to $\sigma_{i+1}$). Note that $\mathcal{A}'$ perfectly simulates the execution of $\mathcal{A}$ for any input, and thus achieve the same distinguishing advantage.}

From this point on we let $Q$ and $P$ denote the probability distributions on $[N]^q$ corresponding to sampling the sequence $X=(X_1,\ldots,X_q)$
uniformly {\em with} and {\em without} replacement, respectively, from $[N]$. Namely, under $Q$ we have that $X_i\stackrel{\text{i.i.d.}}{\sim}\mathrm{Uniform}([N])$ for $i\in[q]$, whereas under $P$ we have that $X_1,\ldots,X_q$ are the first $q$ entries of a uniform random permutation on $[N]$. The distribution of the algorithm's output under $Q$ (respectively $P$) is denoted by $Q_{\mathcal{A}}$ (respectively $P_{\mathcal{A}}$).

\MyParagraph{Main results.} The following theorem states our main result, upper bounding the distinguishing advantage of any memory-bounded streaming algorithm, when measured via KL divergence:

\medskip
\begin{theorem}\label{Thm:Main}
	For any $N\geq 1$, $q = o(N)$ and $s_1, \ldots, s_q$ such that $0\le s_i = O(N)$ for all $i \in [q]$, and for any $(s_1, \ldots, s_q)$-memory-bounded algorithm $\mathcal{A}$ it holds that
	\[ D_{\mathsf{KL}}(P_{\mathcal{A}}||Q_{\mathcal{A}})\le (1 + o(1)) \cdot \frac{\sum_{i=1}^{q-1} s_i + q\cdot \log_2 N}{N \log_2(N/q)}  \; . \]%
\end{theorem}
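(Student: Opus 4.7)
The plan is to reduce the end-to-end divergence to a sum of per-step mutual informations via the KL chain rule and data processing, and then bound each per-step MI using both the memory budget and the exchangeability of $X_1,\ldots,X_q$ under $P$.

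First, iterating the KL chain rule on pairs $(\sigma_{i-1},\sigma_i)$ and applying data processing on the projection to $\sigma_i$ yields
\begin{equation*}
D_{\mathsf{KL}}(P_{\mathcal{A}}\|Q_{\mathcal{A}}) \;\le\; \sum_{i=1}^{q} I_P(X_i;\sigma_{i-1}).
\end{equation*}
Indeed, since $\sigma_i = f_i(\sigma_{i-1},X_i)$ is deterministic and under $Q$ the input $X_i$ is uniform on $[N]$ and independent of $\sigma_{i-1}$, a further application of data processing on the step $X_i\mapsto\sigma_i$ conditional on $\sigma_{i-1}$ bounds each term $D_{\mathsf{KL}}(P_{\sigma_i|\sigma_{i-1}}\|Q_{\sigma_i|\sigma_{i-1}}\mid P_{\sigma_{i-1}})$ by $I_P(X_i;\sigma_{i-1})$.

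Next, setting $T := \{X_1,\ldots,X_{i-1}\}$, exchangeability under $P$ makes $X_i$ conditionally independent of $\sigma_{i-1}$ given $T$, so $I_P(X_i;\sigma_{i-1}\mid T)=0$, and hence
\begin{equation*}
I_P(X_i;\sigma_{i-1}) \;=\; \log\frac{N}{N-i+1} \;-\; I_P(X_i;T\mid\sigma_{i-1}).
\end{equation*}
The per-step task thus reduces to lower-bounding $I_P(X_i;T\mid\sigma_{i-1})$ by using the residual-entropy estimate $H_P(T\mid\sigma_{i-1}) \ge \log\binom{N}{i-1}-s_{i-1}\log 2$ that comes from the memory constraint $H_P(\sigma_{i-1})\le s_{i-1}\log 2$.

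The hard part is converting this residual-entropy bound into the sharp per-step inequality
\begin{equation*}
I_P(X_i;\sigma_{i-1}) \;\le\; (1+o(1))\cdot\frac{s_{i-1}\log 2 + \log N}{N\log(N/q)},
\end{equation*}
which, after summing over $i$ and using $s_0=0$ so that $\sum_{i=1}^{q} s_{i-1} = \sum_{i=1}^{q-1}s_i$, yields the theorem. Heuristically, since $H_P(T)\approx(i-1)\log(N/(i-1))$, each bit of $\sigma_{i-1}$ can identify only $\approx 1/\log(N/q)$ of an element of $T$ on average, and each identified element shifts one coordinate of $X_i$'s posterior by $O(1/N)$. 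I expect to make this rigorous by working directly with the closed form $I_P(X_i;\sigma_{i-1}) = \log(N/(N-i+1)) + \frac{1}{N-i+1}\mathbb{E}[\sum_x p_x\log p_x]$, where $p_x = P(x\notin T\mid\sigma_{i-1})$ and $\sum_x p_x = N-i+1$, combining the convexity of $p\mapsto p\log p$ with the entropy bound on the posterior over $T$ (for example via a Fano-type argument on MAP recovery of $T$ from $(\sigma_{i-1},X_i)$); the hypotheses $q=o(N)$ and $s_i=O(N)$ keep the associated Stirling and Taylor approximations uniformly $(1+o(1))$-tight.
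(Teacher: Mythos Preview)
Your reduction to $\sum_i I_P(X_i;\sigma_{i-1})$ and your closed form
\[
I_P(X_i;\sigma_{i-1}) \;=\; \log\frac{N}{N-i+1} \;+\; \frac{1}{N-i+1}\,\mathbb{E}\Big[\sum_{x} p_x\log p_x\Big],\qquad p_x=P(x\notin T\mid\sigma_{i-1}),
\]
are exactly where the paper arrives as well (with $q_x=1-p_x$ and $f(q_x)=-p_x\log p_x$). The gap is precisely at the step you flag as ``the hard part'': turning the residual-entropy bound $H_P(T\mid\sigma_{i-1})\ge\log\binom{N}{i-1}-s_{i-1}\log 2$ into the sharp estimate $I_P(X_i;\sigma_{i-1})\le(1+o(1))\,\frac{s_{i-1}\log 2+\log N}{N\log(N/q)}$. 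The tools you name do not close it. Convexity of $p\mapsto p\log p$ together with the linear constraint $\sum_x p_x=N-i+1$ bounds $\sum_x p_x\log p_x$ from \emph{below} (Jensen), which only yields the trivial $I\ge 0$; to upper bound $I$ you need a \emph{lower} bound on $-\sum_x p_x\log p_x$, and the minimum under the sum constraint alone is $0$ (attained when each $p_x\in\{0,1\}$). A generic Fano argument on MAP recovery of $T$ likewise controls a single error probability, not the profile $(p_x)_x$ needed here.

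What the paper does at this point is a specific change of variables that your sketch is missing. Writing $A_x=\mathds{1}\{x\in T\}$ so that $h_2(q_x)=H(A_x\mid\sigma_{i-1}=\gamma)$, one introduces $\varphi:=f\circ h_2^{-1}$ and proves two analytic facts: (i) $f(q)\ge\varphi(h_2(q))$ for all $q\in[0,1]$ (equality on $[0,1/2]$, and $f(q)\ge f(1-q)$ on $(1/2,1]$), and (ii) $\varphi$ is convex and nondecreasing. Then
\[
\sum_x f(q_x)\;\ge\;\sum_x \varphi\big(H(A_x\mid\gamma)\big)\;\ge\; N\,\varphi\Big(\tfrac{1}{N}\sum_x H(A_x\mid\gamma)\Big)\;\ge\; N\,\varphi\Big(\tfrac{1}{N}H(T\mid\gamma)\Big),
\]
where the last step uses subadditivity $\sum_x H(A_x\mid\gamma)\ge H(A_1,\ldots,A_N\mid\gamma)=H(T\mid\gamma)$. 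Plugging in $H(T\mid\gamma)\ge\log\binom{N}{i-1}-s_{i-1}\log 2$ and Taylor-expanding $\varphi$ near $h_2((i-1)/N)$ (this is where $q=o(N)$ and $s_i=O(N)$ enter, giving $(h_2^{-1})'(h_2(\alpha))=(1+o(1))/\log(N/i)$) yields the claimed per-step bound. Your outline is correct, but without this $\varphi$-reparametrization (or an equivalent device that couples $-p_x\log p_x$ to $h_2(q_x)$ in a convexity-compatible way), the argument does not go through.
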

\medskip

In particular, when $q \leq N^{1 - \epsilon}$ for any constant $\epsilon > 0$, then also $s_i\le O(q\cdot \log N)\le O(N)$, and we obtain the following corollary:
\medskip
\begin{corollary}\label{Cor:Main}
	For any $N\geq 1$, constant $\epsilon > 0$, $q \leq N^{1 - \epsilon}$ and $s_1, \ldots, s_q$ such that $s_i \ge 0$ for all $i \in [q]$, and for any $(s_1, \ldots, s_q)$-memory-bounded algorithm $\mathcal{A}$ it holds that
	\[ D_{\mathsf{KL}}(P_{\mathcal{A}}||Q_{\mathcal{A}})\le (1+o(1)) \cdot \frac{\sum_{i=1}^{q-1} s_i+q\cdot\log_2N}{\epsilon\cdot N \log_2 N}   \; . \]%
\end{corollary}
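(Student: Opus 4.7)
The plan is to derive Corollary \ref{Cor:Main} as a direct specialization of Theorem \ref{Thm:Main}, so the bulk of the work is verifying the hypotheses of that theorem and simplifying the denominator under the extra assumption $q\le N^{1-\epsilon}$.

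First, I would check that the side conditions of Theorem \ref{Thm:Main} are automatic in this regime. Since $q\le N^{1-\epsilon}$ with $\epsilon>0$ constant, we immediately have $q=o(N)$. For the memory bounds, I would invoke the without-loss-of-generality assumption introduced right after the definition of $(s_1,\ldots,s_q)$-memory-bounded algorithms: namely, $s_1\le \lceil\log_2 N\rceil$ and $s_{i+1}\le s_i+\lceil\log_2 N\rceil$. Unrolling this recursion gives $s_i\le i\cdot \lceil\log_2 N\rceil\le q\cdot \lceil\log_2 N\rceil$, and substituting $q\le N^{1-\epsilon}$ yields $s_i\le N^{1-\epsilon}\cdot \lceil\log_2 N\rceil=o(N)$. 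In particular $s_i=O(N)$, so Theorem \ref{Thm:Main} applies to $\mathcal{A}$.

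Next I would simplify the denominator $N\log_2(N/q)$ appearing in the bound of Theorem~\ref{Thm:Main}. Using $q\le N^{1-\epsilon}$, we get $N/q\ge N^{\epsilon}$, hence
\[
\log_2(N/q)\;\ge\;\epsilon\cdot \log_2 N.
\]
Plugging this into the bound of Theorem~\ref{Thm:Main} gives
\[
D_{\mathsf{KL}}(P_{\mathcal{A}}\|Q_{\mathcal{A}})\le (1+o(1))\cdot\frac{\sum_{i=1}^{q-1}s_i+q\log_2 N}{\epsilon\cdot N\log_2 N},
\]
which is exactly the statement of the corollary.

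There is essentially no technical obstacle here; the only point one has to be slightly careful with is that the corollary's hypothesis only requires $s_i\ge 0$ and does not enforce $s_i=O(N)$ explicitly, so one must indeed appeal to the normalization lemma in the footnote to upgrade an arbitrary $(s_1,\ldots,s_q)$-memory-bounded algorithm to an equivalent one whose states satisfy the incremental growth bound. Once this normalization is applied, the rest is a two-line logarithm manipulation.
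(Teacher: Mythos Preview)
Your proposal is correct and matches the paper's approach exactly: the paper's entire argument for the corollary is the one-line remark preceding its statement, observing that $q\le N^{1-\epsilon}$ forces $s_i\le O(q\log N)\le O(N)$ (via the normalization in the footnote) so that Theorem~\ref{Thm:Main} applies, after which $\log_2(N/q)\ge \epsilon\log_2 N$ gives the stated bound. Your write-up simply spells out these two steps in more detail.
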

\medskip

Finally, for this range of parameters we observe that our bound is nearly tight: 
\medskip
\begin{theorem}\label{Thm:Tight}
	For any $N\geq 1$, and $s_1, \ldots, s_q$ such that $s_i \ge 1$ for all $i \in [q]$, there exists an $(s_1, \ldots, s_q)$-memory-bounded algorithm $\mathcal{A}$ for which
	\[ D_{\mathsf{KL}}(P_{\mathcal{A}}||Q_{\mathcal{A}}) \geq \frac{\sum_{i=1}^{q-1} s_i-q\cdot(\log_2 N+1)}{N \log_2 N}   \; .\]%
\end{theorem}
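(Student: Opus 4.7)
The plan is to construct an explicit $(s_1,\dots,s_q)$-memory-bounded algorithm $\mathcal{A}$ whose state at time $i$ encodes a collision-flag $f_i \in \{0,1\}$ together with an ordered list $L_i$ of at most $k_i := \lfloor (s_i-1)/\lceil \log_2 N \rceil \rfloor$ distinct elements of $[N]$, packed into exactly $s_i$ bits (padded with zeros if necessary). On input $x_i$, the update rule sets $f_i := f_{i-1} \vee \mathbf{1}[x_i \in L_{i-1}]$ and obtains $L_i$ by inserting $x_i$ into $L_{i-1}$ (if $x_i \notin L_{i-1}$) and dropping the oldest entries until the size is at most $k_i$. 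The intuition is that under $P$ all entries $x_1,\dots,x_q$ are distinct, so $x_i \notin L_{i-1}$ always and $f_q = 0$ with probability one, whereas under $Q$ each $x_i$ is fresh uniform on $[N]$ and collides with the stored list with probability $|L_{i-1}|/N$.

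The crucial observation is that the WLOG normalization $s_1 \leq \lceil \log_2 N \rceil$ and $s_{i+1} \leq s_i + \lceil \log_2 N \rceil$ translates to $k_1 = 0$ and $k_{i+1} \leq k_i + 1$, from which a short induction shows that conditional on $\{f_{i-1}=0\}$ the list $L_{i-1}$ has size exactly $k_{i-1}$, regardless of the history and regardless of how $s_i$ fluctuates; in particular, no element is ever dropped along the no-collision branch. Therefore, under $Q$ and conditional on any history $x_1,\dots,x_{i-1}$ lying in $\{f_{i-1}=0\}$, the list $L_{i-1}$ is a deterministic set of size $k_{i-1}$ while $x_i$ is uniform on $[N]$ and independent of the past, giving $\Pr_Q[x_i \in L_{i-1} \mid x_1,\dots,x_{i-1}, f_{i-1}=0] = k_{i-1}/N$. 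Telescoping yields $\Pr_Q[f_q = 0] = \prod_{i=1}^{q}\bigl(1 - k_{i-1}/N\bigr)$, while $\Pr_P[f_q = 0] = 1$.

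Since $f_q$ is a deterministic function of the output $\sigma_q$, the data processing inequality together with $-\log(1-t) \geq t$ gives
\[
D_{\mathsf{KL}}(P_{\mathcal{A}}\|Q_{\mathcal{A}}) \;\geq\; -\log \Pr_Q[f_q=0] \;\geq\; \frac{1}{N}\sum_{i=1}^{q-1} k_i.
\]
To finish, I translate $\sum k_i$ back to $\sum s_i$: the packing bound $s_i \leq (k_i+1)\lceil \log_2 N\rceil$ implies $\sum_{i=1}^{q-1} k_i \geq \bigl(\sum_{i=1}^{q-1} s_i - (q-1)\lceil \log_2 N \rceil\bigr)/\lceil \log_2 N \rceil$, and substituting $\log_2 N \leq \lceil \log_2 N\rceil \leq \log_2 N + 1$ in the appropriate places yields the stated bound. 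The main subtlety, I expect, lies in the induction of paragraph two: the invariant $|L_{i-1}| = k_{i-1}$ on the event $\{f_{i-1}=0\}$ must be maintained even across steps where $s_i$ (and hence $k_i$) decreases, and this is precisely where the WLOG constraints $k_1 = 0$ and $k_{i+1} \leq k_i + 1$ are indispensable, since they guarantee that the "greedy growth" strategy of the algorithm never has to drop previously stored values along the no-collision branch.
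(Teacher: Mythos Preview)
Your construction and analysis coincide with the paper's: store a flag bit plus a list of size $k_i$, show the list has size exactly $k_{i-1}$ on the no-collision event, compute $Q[f_q=0]=\prod_{i=1}^{q-1}(1-k_i/N)$ while $P[f_q=0]=1$, and finish via $-\log(1-t)\ge t$ together with $k_i \ge s_i/\lceil\log_2 N\rceil - 1$.

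One point in your write-up is not quite right, though it does not break the argument: the claim that ``no element is ever dropped along the no-collision branch'' is false. Whenever $k_i \le k_{i-1}$ (which the WLOG normalisation certainly allows), you insert $x_i$ to reach size $k_{i-1}+1$ and must then trim down to $k_i$, so elements are dropped. The constraint $k_{i+1}\le k_i+1$ is indispensable not because it prevents dropping but because it prevents the list from being \emph{underfull}: from $|L_{i-1}|=k_{i-1}$, inserting one new element gives size $k_{i-1}+1\ge k_i$, so after trimming you land exactly on $k_i$. Your stated invariant $|L_{i-1}|=k_{i-1}$ is therefore correct, just for a different reason than the one you give, and everything downstream (the product formula for $Q[f_q=0]$ and the KL bound) goes through unchanged.
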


\section{Proof of Theorem \ref{Thm:Main}}\label{Sec:Proof}

Our proof is based on an induction argument showing that $D_{\mathsf{KL}}(P_{\mathcal{A}}\|Q_{\mathcal{A}})\leq \sum_{i=1}^q I(X_i;\Sigma_{i-1})$, where the mutual information is computed with respect to $P$, and $\Sigma_{i-1}$ is the state of the internal memory at step $i-1$ of the computation. Then, we leverage the fact that $\Sigma_{i-1}-(X_1,\ldots,X_{i-1})-X_i$ form a Markov chain in this order, and that $I(\Sigma_{i-1};X_1,\ldots,X_{q-1})\leq s_{i-1}\log{2}$ due the the memory constraints, in order to derive an information bottleneck \cite{tpb99} upper bound on $I(X_i;\Sigma_{i-1})$.

\subsection{An Induction Argument}

We prove the following lemma which is similar to a lemma proved by Jaeger and Tessaro \cite{JaegerT19}.

\medskip
\begin{lemma}\label{Lem:JT}
	Let $P_{X}$ and $Q_{X}$ be two distributions on $\mathcal{X}^q$, where the induced marginals satisfy $P_{X_i}=Q_{X_i}$ for all $i=1,\ldots,n$, and in addition $Q_{X}=\prod_{i=1}^q Q_{X_i}$ (i.e., under the distribution $Q_X$ the random variables $X_1,\dots,X_q$ are independent, where each $X_i$ is distributed according to the distribution $Q_{X_i}$). For a streaming computation performed by the algorithm $\mathcal{A}$, let $\Sigma_i=\Sigma_i(X_1,\ldots,X_i)$  be the random variable corresponding to the state $\sigma_i$ produced in the $i$th step of the computation. Then
	\begin{align}
	D_{\mathsf{KL}}(P_{\mathcal{A}}||Q_{\mathcal{A}})&\le \sum_{i=1}^q D_{\mathsf{KL}}(P_{X_i|\Sigma_{i-1}}\|P_{X_i}|P_{\Sigma_{i-1}}) \nonumber\\
	&=\sum_{i=1}^{q}I(X_i;\Sigma_{i-1}) \nonumber ,
	\end{align}
	where the mutual information is computed with respect to the joint distribution $P_{X_i\Sigma_{i-1}}$, induced by $P_X$.
\end{lemma}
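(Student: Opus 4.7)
The plan is to prove the bound by induction on $i$, via a one-step inequality that relates $D_{\mathsf{KL}}(P_{\Sigma_i}\|Q_{\Sigma_i})$ to $D_{\mathsf{KL}}(P_{\Sigma_{i-1}}\|Q_{\Sigma_{i-1}})$ plus a single mutual-information increment $I(X_i;\Sigma_{i-1})$ computed under $P$. Since $\mathcal{A}(X)=\Sigma_q$, iterating this one-step inequality from $i=1$ to $i=q$, with $D_{\mathsf{KL}}(P_{\Sigma_0}\|Q_{\Sigma_0})=0$ as the base case (both distributions place all mass on $\bot$), immediately gives the lemma.

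The one-step inequality rests on two standard moves. First, because $\Sigma_i$ is a deterministic function of the pair $(\Sigma_{i-1},X_i)$, namely $\Sigma_i=\mathcal{A}(i,\Sigma_{i-1},X_i)$, the data processing inequality for KL divergence yields
\[
D_{\mathsf{KL}}(P_{\Sigma_i}\|Q_{\Sigma_i})\;\le\;D_{\mathsf{KL}}(P_{\Sigma_{i-1},X_i}\|Q_{\Sigma_{i-1},X_i}).
\]
Second, the chain rule for relative entropy decomposes the right-hand side as
\[
D_{\mathsf{KL}}(P_{\Sigma_{i-1}}\|Q_{\Sigma_{i-1}})\;+\;D_{\mathsf{KL}}(P_{X_i|\Sigma_{i-1}}\|Q_{X_i|\Sigma_{i-1}}\mid P_{\Sigma_{i-1}}).
\]
The product structure $Q_X=\prod_i Q_{X_i}$ makes $X_i$ independent of $X_1,\ldots,X_{i-1}$ under $Q$, and therefore independent of the deterministic function $\Sigma_{i-1}=\Sigma_{i-1}(X_1,\ldots,X_{i-1})$, so $Q_{X_i|\Sigma_{i-1}}=Q_{X_i}$; combined with the marginal hypothesis $Q_{X_i}=P_{X_i}$, this collapses the conditional divergence to $D_{\mathsf{KL}}(P_{X_i|\Sigma_{i-1}}\|P_{X_i}\mid P_{\Sigma_{i-1}})=I(X_i;\Sigma_{i-1})$.

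The main conceptual point I expect to have to get right is isolating the dependence on the \emph{current} state $\Sigma_{i-1}$ only. A naive chain rule on $D_{\mathsf{KL}}(P_{X}\|Q_{X})$ or on $D_{\mathsf{KL}}(P_{\Sigma_{0:q}}\|Q_{\Sigma_{0:q}})$ would produce increments of the form $I(X_i;X_{1:i-1})$ or $I(X_i;\Sigma_{0:i-1})$, neither of which can be controlled by the per-step memory budget $s_{i-1}$. Applying the data processing inequality at every step to the pair $(\Sigma_{i-1},X_i)$ before invoking the chain rule is precisely what is needed to ``forget'' the earlier history and make each summand $I(X_i;\Sigma_{i-1})$ amenable to the entropy bound $H(\Sigma_{i-1})\le s_{i-1}\log 2$ used later in the proof of Theorem~\ref{Thm:Main}.
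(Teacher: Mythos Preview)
Your proposal is correct and follows essentially the same approach as the paper's proof: apply the data processing inequality to pass from $\Sigma_i$ to $(\Sigma_{i-1},X_i)$, use the chain rule for KL divergence, simplify the conditional term via the product structure of $Q$ and the equality of marginals, and then iterate with the trivial base case $D_{\mathsf{KL}}(P_{\Sigma_0}\|Q_{\Sigma_0})=0$. Your closing remark about why it is essential to apply data processing at each step, rather than chaining on the full history, is a helpful observation that the paper does not make explicit.
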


\begin{proof}
	By definition of $\Sigma_i$ we have that $D_{\mathsf{KL}}(P_{\mathcal{A}}\|Q_{\mathcal{A}})=D_{\mathsf{KL}}(P_{\Sigma_q}\|Q_{\Sigma_q})$. Moreover, since $\Sigma_q$ is obtained by processing $(\Sigma_{q-1},X_q)$, the data processing inequality yields 
	\[
	D_{\mathsf{KL}}(P_{\Sigma_q}\|Q_{\Sigma_q}) \le D_{\mathsf{KL}}(P_{X_q\Sigma_{q-1}}\|Q_{X_q\Sigma_{q-1}}) \; .
	\]
	Applying the chain rule, yields
	\begin{align}
	& D_{\mathsf{KL}}(P_{X_q\Sigma_{q-1}}\|Q_{X_q\Sigma_{q-1}})\nonumber \\ 
	&  = D_{\mathsf{KL}}(P_{\Sigma_{q-1}}\|Q_{\Sigma_{q-1}}) + D_{\mathsf{KL}}(P_{X_q|\Sigma_{q-1}}\|Q_{X_q|\Sigma_{q-1}}|P_{\Sigma_{q-1}}) \nonumber \\
	&  = D_{\mathsf{KL}}(P_{\Sigma_{q-1}}\|Q_{\Sigma_{q-1}}) + D_{\mathsf{KL}}(P_{X_q|\Sigma_{q-1}}\|Q_{X_q}|P_{\Sigma_{q-1}}) \label{Eq:Pre1} \\
	& = D_{\mathsf{KL}}(P_{\Sigma_{q-1}}\|Q_{\Sigma_{q-1}}) + D_{\mathsf{KL}}(P_{X_q|\Sigma_{q-1}}\|P_{X_q}|P_{\Sigma_{q-1}})\label{Eq:Pre2},
	\end{align}%
	where~\eqref{Eq:Pre1} follows from the fact that $Q_{X}$ is memoryless such that under this distribution $X_q$ is statistically independent of  $\Sigma_{q-1}=\Sigma_{q-1}(X_1,\ldots,X_{q-1})$, and~\eqref{Eq:Pre2} follows from the assumption that $P_{X_q}= Q_{X_q}$. 	Thus, by induction we obtain that $D_{\mathsf{KL}}(P_{\Sigma_q}\|Q_{\Sigma_q}) \le D_{\mathsf{KL}}(P_{\Sigma_0}\|Q_{\Sigma_0}) +\sum_{i=1}^{q}D_{\mathsf{KL}}(P_{X_i|\Sigma_{i-1}}\|P_{X_i}|P_{\Sigma_{i-1}})$. Recalling that $P_{\Sigma_0}=Q_{\Sigma_0}$ and that $D_{\mathsf{KL}}(P_{X_i|\Sigma_{i-1}}\|P_{X_i}|P_{\Sigma_{i-1}})=I(X_i;\Sigma_{i-1})$, our claim follows.
\end{proof}

\subsection{An Information-Bottleneck Argument}

We make use of the following functions:
\begin{itemize}
	\item For $x\in[0,1]$ the binary entropy function (with respect to the natural basis) is
	\[
	h_2(x)=-x\log(x)-(1-x)\log(1-x) \; ,
	\]
	and we let $h_2^{-1}$ be its inverse restricted to $[0,1/2]$.
	\item For $y\le 1$ we let $f(y)=-(1-y)\log (1-y)$.
	\item For $t\in[0,\log2]$ we let
	\[
	\varphi(t)= f(h_2^{-1}(t))= -(1-h_2^{-1}(t))\log\left(1-h_2^{-1}(t)\right) \; ,
	\]
	and for $t<0$ we let $\varphi(t)=0$.
\end{itemize}
We claim that $f$ is non-decreasing over $[0,1/2]$, that $\varphi$ is non-decreasing and convex, and that for every $t\in[0,1]$ it holds that $f(t)\geq \varphi(h_2(t))$. We defer the proofs to Section \ref{Sec:Prop}.
We state and prove our main technical lemma.

\medskip
\begin{lemma}\label{Lem:Main}
	Let $0\le i< N$ be integers, let $X=(X_1,\ldots,\allowbreak X_{i+1})$ be the random process of sampling $i+1$ elements of $[N]$ uniformly without replacement. Denote $V=(X_1,\ldots,X_i)$, $W=X_{i+1}$, and let $\Gamma$ be a random variable such that $\Gamma-V-W$ form a Markov chain in this order.
	Then, it holds that
	\[
	I(W;\Gamma) \le \log\frac{N}{N-i}-\frac{N}{N-i}\cdot\varphi\left(\frac{\log\binom{N}{i}}{N}-\frac{I(V;\Gamma)}{N}\right)
	\]
\end{lemma}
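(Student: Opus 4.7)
The plan is to reduce the bound on $I(W;\Gamma)$ to a statement about the posterior inclusion probabilities of the elements of $[N]$ in the set $S = \{X_1,\ldots,X_i\}$, and then connect that statement to the constraint on $I(V;\Gamma)$ through binary entropy. First, define $\mu_w(\gamma) = P(w \in S \mid \Gamma = \gamma)$ for $w \in [N]$, so $\sum_{w=1}^N \mu_w(\gamma) = i$. By the Markov chain $\Gamma - V - W$ and the fact that $W \mid V$ is uniform on $[N] \setminus S$, we have $P(W=w \mid \Gamma=\gamma) = (1-\mu_w(\gamma))/(N-i)$, and a direct expansion of $H(W \mid \Gamma=\gamma)$ yields
\[
H(W \mid \Gamma) \;=\; \log(N-i) + \frac{1}{N-i}\,\mathbb{E}\!\left[\sum_{w=1}^N f(\mu_w(\Gamma))\right],
\]
so that
\[
I(W;\Gamma) \;=\; \log\frac{N}{N-i} \;-\; \frac{N}{N-i}\cdot\frac{1}{N}\,\mathbb{E}\!\left[\sum_{w=1}^N f(\mu_w(\Gamma))\right].
\]

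The problem thus reduces to lower-bounding the average $\frac{1}{N}\mathbb{E}[\sum_w f(\mu_w(\Gamma))]$ by $\varphi\!\left((\log\binom{N}{i}-I(V;\Gamma))/N\right)$. The pipeline for this has three pieces. Shannon subadditivity applied to the indicator vector of $S$ gives $H(S \mid \Gamma=\gamma) \le \sum_w h_2(\mu_w(\gamma))$; averaging over $\Gamma$ and using $H(S) = \log\binom{N}{i}$ together with $I(S;\Gamma) \le I(V;\Gamma)$ (data processing, since $S$ is a function of $V$) yields
\[
\frac{1}{N}\,\mathbb{E}\!\left[\sum_w h_2(\mu_w(\Gamma))\right] \;\ge\; \frac{\log\binom{N}{i} - I(V;\Gamma)}{N}.
\]
Next, the pointwise inequality $f(t) \ge \varphi(h_2(t))$ (stated in Section~\ref{Sec:Prop}) upgrades the target average to $\frac{1}{N}\mathbb{E}[\sum_w f(\mu_w)] \ge \frac{1}{N}\mathbb{E}[\sum_w \varphi(h_2(\mu_w))]$. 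Finally, convexity of $\varphi$ together with Jensen's inequality applied to the joint probability measure on $(\gamma,w)$ with $w$ uniform on $[N]$, combined with monotonicity of $\varphi$, yields
\[
\tfrac{1}{N}\,\mathbb{E}\!\left[\sum_w \varphi(h_2(\mu_w(\Gamma)))\right] \;\ge\; \varphi\!\left(\tfrac{1}{N}\,\mathbb{E}\!\left[\sum_w h_2(\mu_w(\Gamma))\right]\right) \;\ge\; \varphi\!\left(\tfrac{\log\binom{N}{i}-I(V;\Gamma)}{N}\right).
\]
Substituting into the expression for $I(W;\Gamma)$ gives the claimed bound. The degenerate case $I(V;\Gamma) > \log\binom{N}{i}$ (where the argument of $\varphi$ is negative) is covered by the convention $\varphi(t) = 0$ for $t < 0$, since the assertion then reduces to the data-processing bound $I(W;\Gamma) \le I(W;V) = \log\frac{N}{N-i}$.

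The main conceptual obstacle is realizing that $H(W \mid \Gamma)$ can be expressed purely through the first-order inclusion marginals $\mu_w(\gamma)$, so that the information constraint, which naturally lives on $S$, a much higher-dimensional object than $W$, can be funneled into the same marginals via entropy subadditivity. Once this common representation is identified, the function-analytic properties of $f$ and $\varphi$ stated in the paper do the rest, and the derivation is essentially mechanical.
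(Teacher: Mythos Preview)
Your proposal is correct and follows essentially the same approach as the paper: compute $P(W=w\mid\Gamma=\gamma)$ via the inclusion probabilities $\mu_w(\gamma)$, express $H(W\mid\Gamma)$ through $\sum_w f(\mu_w)$, apply $f\ge\varphi\circ h_2$ together with convexity of $\varphi$, and close via entropy subadditivity on the indicators $A_j=\mathds{1}_{\{j\in S\}}$. The only cosmetic differences are that you invoke data processing ($I(S;\Gamma)\le I(V;\Gamma)$) where the paper writes out the equivalent decomposition $V\leftrightarrow(S,\pi)$ with $H(\pi\mid S,\Gamma)\le\log(i!)$, and you apply Jensen once over the joint $(w,\gamma)$ measure where the paper applies it first over $w$ and then over $\gamma$.
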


\begin{proof}	
	We first note that
	\begin{align}
	I(W;\Gamma)&=H(W)-H(W|\Gamma)=\log N-H(W|\Gamma) \; , \label{eq:equivalentdef}
	\end{align}	
	and that
	\begin{align}
	I(V;\Gamma)&=H(V)-H(V|\Gamma)\nonumber\\
	&=\log\frac{N!}{(N-i)!}-H(V|\Gamma) \; . \label{eq:equivalentdef2}
	\end{align}
	Consequently, we derive a lower bound on $H(W|\Gamma)$ in terms of $H(V|\Gamma)$. To that end, we first compute the distribution $\Pr(W=j|\Gamma=\gamma)$ for $j\in[N]$ and $\gamma\in\mathsf{supp}(\Gamma)$. We have
	\begin{align*}
	& \Pr(W=j|\Gamma=\gamma) \\
	&\qquad = \Pr(W=j,j\notin V|\Gamma=\gamma) \\
	&\qquad = \Pr(j\notin V|\Gamma=\gamma)\Pr(W=j| j\notin V,\Gamma=\gamma) \\
	&\qquad = \Pr(j\notin V|\Gamma=\gamma)\Pr(W=j| j\notin V) \\
	&\qquad = \frac{1-\Pr(j\in V|\Gamma=\gamma)}{N-i}.
	\end{align*}
	It follows that
	\begin{align}
	& H(W|\Gamma=\gamma) \nonumber\\
	& \qquad =\sum_{j=1}^N \Pr(W=j|\Gamma=\gamma)\log\frac{1}{\Pr(W=j|\Gamma=\gamma)}\nonumber\\
	& \qquad =\sum_{j=1}^N \frac{1-\Pr(j\in V|\Gamma=\gamma)}{N-i}\log \frac{N-i}{1-\Pr(j\in V|\Gamma=\gamma)}\nonumber\\
	& \qquad =\log(N-i)+\frac{1}{N-i} \sum_{j=1}^N f\left(\Pr(j\in V|\Gamma=\gamma)\right)\nonumber\\
	& \qquad \geq \log(N-i)+\frac{1}{N-i} \sum_{j=1}^N\varphi\left(h_2\left(\Pr(j\in V|\Gamma=\gamma)\right)\right) \; . \label{eq:condentbound}
	\end{align}
	Defining the random variables $A_j=\mathds{1}_{\{j\in V\}}$, we further write 
	\begin{align}
	& \sum_{j=1}^N\varphi\left(h_2\left(\Pr(j\in V|\Gamma=\gamma)\right)\right)\nonumber\\
	& \qquad =\sum_{j=1}^N\varphi\left(H(A_j|\Gamma=\gamma)\right)\nonumber\\
	& \qquad \geq N\varphi\left(\frac{1}{N}\sum_{j=1}^NH(A_j|\Gamma=\gamma)\right),\label{eq:probbound}
	\end{align}
	where in the last step we used the convexity of $\varphi$.
	Next, using the fact that conditioning reduces entropy, we note that
	\begin{align*}
	\sum_{j=1}^N H(A_j|\Gamma=\gamma)&\geq \sum_{j=1}^N H(A_j|A_1,\dots,A_{j-1},\Gamma=\gamma) \\
	&=H(A_1,\ldots,A_N|\Gamma=\gamma).
	\end{align*}
	Note that $A_1,\ldots,A_N$ dictate the elements that belong to $V$. Let $\pi$ be the order at which these elements appear. Together, $(A_1,\ldots,A_N)$ and $\pi$ completely determine $V$, and vice versa. We have that
	\begin{align*}
	& H(A_1,\ldots,A_N|\Gamma=\gamma) \\
	& \qquad = H(A_1,\ldots,A_N,\pi|\Gamma=\gamma)-H(\pi|A_1,\ldots,A_N,\Gamma=\gamma) \\
	& \qquad = H(V|\Gamma=\gamma)-H(\pi|A_1,\ldots,A_N,\Gamma=\gamma) \\
	& \qquad \geq H(V|\Gamma=\gamma)-\log (i!).
	\end{align*}
	Plugging this into~\eqref{eq:probbound} (using the monotonicity of $\varphi$), and then into~\eqref{eq:condentbound}, we obtain
	\begin{align*}
	&H(W|\Gamma=\gamma)\\
	&\qquad\geq \log(N-i)+\frac{N}{N-i} \varphi\left( \frac{H(V|\Gamma=\gamma)-\log (i!)}{N}\right).
	\end{align*}
	Recalling that $H(W|\Gamma)=\mathbb{E}_{\gamma}\left[H(W|\Gamma=\gamma)\right]$ and $H(V|\Gamma)=\mathbb{E}_{\gamma}\left[H(V|\Gamma=\gamma)\right]$, and using the convexity of $\varphi$, we obtain
	\begin{align}
	H(W|\Gamma)\geq \log(N-i)+\frac{N}{N-i}  \varphi\left( \frac{H(V|\Gamma)-\log (i!)}{N}\right) \; , \label{eq:MGL}
	\end{align}
	and the statement follows by plugging~\eqref{eq:MGL} into~\eqref{eq:equivalentdef} using~\eqref{eq:equivalentdef2}.
\end{proof}

Next, we simplify the bound of Lemma \ref{Lem:Main}.
\medskip
\begin{corollary}\label{Cor:est}
	In the setting of Lemma \ref{Lem:Main}, if $i=o(N)$ and $I(V;\Gamma)=O(N)$ then
	\[
	I(W;\Gamma) \le (1+o(1))\cdot\frac{I(V;\Gamma)+\log N}{N\log (N/i)}
	\]
\end{corollary}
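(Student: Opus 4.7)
The plan is to linearize the bound from Lemma \ref{Lem:Main} via a tangent-line lower bound on $\varphi$, anchored at the point $t = h_2(i/N)$. This anchor is chosen because of the algebraic identity
\[
\frac{N}{N-i}\,\varphi(h_2(i/N)) \;=\; \frac{N}{N-i}\cdot\bigl(-(1-i/N)\log(1-i/N)\bigr) \;=\; \log\frac{N}{N-i},
\]
so that when I substitute the tangent approximation into the bound of Lemma \ref{Lem:Main}, the additive $\log(N/(N-i))$ term will cancel exactly and only first-order corrections will remain.

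Setting $\Delta = I(V;\Gamma)/N$ and using convexity of $\varphi$, I would invoke
\[
\varphi\!\left(\tfrac{\log\binom{N}{i}}{N} - \Delta\right) \;\ge\; \varphi(h_2(i/N)) + \varphi'(h_2(i/N))\!\left(\tfrac{\log\binom{N}{i}}{N} - \Delta - h_2(i/N)\right).
\]
Substituting into Lemma \ref{Lem:Main} and using the identity above, the bound simplifies to
\[
I(W;\Gamma) \;\le\; \frac{N}{N-i}\,\varphi'(h_2(i/N))\cdot\left[\bigl(h_2(i/N) - \tfrac{\log\binom{N}{i}}{N}\bigr) \,+\, \tfrac{I(V;\Gamma)}{N}\right].
\]
It then remains to estimate three factors. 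First, $N/(N-i) = 1 + o(1)$ since $i = o(N)$. Second, using $\varphi'(t) = f'(h_2^{-1}(t))/h_2'(h_2^{-1}(t))$ together with $f'(i/N) = 1+\log(1-i/N) = 1+o(1)$ and $h_2'(i/N) = \log\tfrac{1-i/N}{i/N} = (1+o(1))\log(N/i)$, one obtains $\varphi'(h_2(i/N)) = (1+o(1))/\log(N/i)$. Third, Stirling's approximation in the form $\log\binom{N}{i} \ge Nh_2(i/N) - \log(N+1)$ yields $h_2(i/N) - \log\binom{N}{i}/N \le (1+o(1))\log N / N$. Multiplying these estimates together produces the stated bound.

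The main subtlety to verify is that the tangent-line inequality remains valid when the argument $\log\binom{N}{i}/N - \Delta$ is negative, which can happen when $I(V;\Gamma)$ is large. This is fine: convexity of $\varphi$ on $[0,\log 2]$ combined with $\varphi(0)=0$ forces the tangent at the positive point $h_2(i/N)$ to satisfy $L(0) \le 0$, and since its slope $\varphi'(h_2(i/N))$ is non-negative, the linear extrapolation only becomes more negative for negative arguments, so it still bounds $\varphi \ge 0$ from below on all of $\mathbb{R}$. Beyond this point, the argument is essentially bookkeeping of the various $(1+o(1))$ factors; the hypothesis $I(V;\Gamma) = O(N)$ is not used directly in the tangent-line step but is what allows the bound on the square bracket to collapse into a uniform $(1+o(1))$ prefactor in the final estimate.
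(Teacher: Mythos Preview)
Your proof is correct and closely related to the paper's, but the execution is different and arguably cleaner. Both arguments linearize around the anchor $h_2(i/N)$; the paper decomposes $\varphi=f\circ h_2^{-1}$, applies the tangent inequality to the convex function $h_2^{-1}$, then feeds the result through the monotone $f$ and finishes with $\log(1+\delta)\le\delta$. You instead apply the tangent inequality directly to the convex function $\varphi$, which makes the cancellation $\frac{N}{N-i}\varphi(h_2(i/N))=\log\frac{N}{N-i}$ immediate and avoids the intermediate $\delta$ bookkeeping. Your treatment of the negative-argument case is also more explicit than the paper's. One small remark: your final sentence overstates the role of the hypothesis $I(V;\Gamma)=O(N)$. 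In the paper's route it is genuinely needed to force $\delta=o(1)$, but in your route the $(1+o(1))$ factors arise solely from $i=o(N)$ (via $N/(N-i)$, $\varphi'(h_2(i/N))$, and the Stirling correction), and the bracket $[\,(h_2(i/N)-\log\binom{N}{i}/N)+I(V;\Gamma)/N\,]$ is already bounded by $(1+o(1))(I(V;\Gamma)+\log N)/N$ without any constraint on $I(V;\Gamma)$. So your argument in fact yields the corollary without that hypothesis.
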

\begin{proof}
	To that end, we will use the following well-known estimate (proved using Stirling's approximation, e.g.\ \cite{MacWilliamsS77}):
	\begin{align*}
	& N h_2\left(\frac{i}{N}\right)-\frac{1}{2}\log\left(8 i\left(1-\frac{i}{N}\right)\right) \leq \log{N\choose i} \\
	& \qquad \leq N h_2\left(\frac{i}{N}\right)-\frac{1}{2}\log\left(2\pi i\left(1-\frac{i}{N}\right)\right)
	\end{align*}
	In particular, for large enough $N$ it holds that
	\begin{align*}
	& \log{N\choose i} \ge N h_2\left(\frac{i}{N}\right)-\log N
	\end{align*}
	Let $\alpha=i/N$. Using the monotonicity of $\varphi$, the bound of Lemma \ref{Lem:Main} reads as
	\begin{align}
	-\log(1-\alpha)-\frac{1}{1-\alpha}\varphi\left(h_2(\alpha)-\frac{I(V;\Gamma)+\log N}{N}\right) \; .\label{eq:est1}
	\end{align}
	Let $\beta=(I(V;\Gamma)+\log N)/{N}$. Due to the the convexity of $h_2^{-1}$ it holds that 
	\begin{align*}
	h_2^{-1}\left(h_2(\alpha)-\beta\right)\ge \alpha-g(\alpha)\cdot\beta \; ,
	\end{align*}
	where $g(\alpha)=\left(h_2^{-1}\right)'  (h_2(\alpha))$. Recall that $\varphi(t)=f(h_2^{-1}(t))$ and that $f$ is increasing at $[0,1/2]$, and hence
	\begin{align*}
	\varphi(h_2(\alpha)-\beta)\ge f(\alpha-g(\alpha)\cdot\beta) \; ,
	\end{align*}
	and we can further upper bound \eqref{eq:est1} by
	\begin{align}
	-\log(1-\alpha)-\frac{1}{1-\alpha}f(\alpha-g(\alpha)\cdot\beta) \; .\label{eq:est2}
	\end{align}
	Denoting $\delta=g(\alpha)\cdot\beta/(1-\alpha)$ and recalling the definition of $f$, we further develop \eqref{eq:est2}
	\begin{align}
	& -\log(1-\alpha)+\frac{1-\alpha+g(\alpha)\cdot\beta}{1-\alpha}\log(1-\alpha+g(\alpha)\cdot\beta) \nonumber\\
	& \qquad = -\log(1-\alpha)+\left(1+\delta\right)\log\left((1-\alpha)\left(1+\delta\right)\right) \nonumber\\
	& \qquad = \delta\log(1-\alpha)+\left(1+\delta\right)\log\left(1+\delta\right) \nonumber \\
	& \qquad \le \left(1+\delta\right)\delta \label{eq:est3}  \\
	& \qquad = \frac{1+\delta}{1-\alpha}g(\alpha)\beta \; , \nonumber
	\end{align}
	where in \eqref{eq:est3} we used the inequality $\log(1+\delta)\le \delta$ that holds for every $\delta$.
	Since $\alpha=o(1)$ we can estimate
	\begin{align*}
	g(\alpha)&=\left(h_2^{-1}\right)'  (h_2(\alpha))
	=1/h'_2(\alpha)\\
	&=\frac{1}{\log(1-\alpha)-\log\alpha} \\
	&=\frac{1+o(1)}{\log (N/i)} \; .
	\end{align*}
	Since we assumed that $I(V;\Gamma)=O(N)$, it also holds that $\delta=O(g(\alpha))=o(1)$, and we conclude that
	\[
	I(W;\Gamma) \le (1+o(1))\cdot\frac{I(V;\Gamma)+\log N}{\log (N/i)}
	\]
\end{proof}

\subsection{Application to Our Setup}

Finally, we can derive Theorem \ref{Thm:Main} from Corollary \ref{Cor:est}. Recall that $P$ and $Q$ designate the probability distributions corresponding to sampling $X=(X_1,\ldots,X_q)$ uniformly without and with replacement, respectively, from $[N]$. Thus, $P_{X_i}=Q_{X_i}$ for all $i \in [q]$, and furthermore, $Q$ is a memoryless distribution. Thus, the conditions of Lemma \ref{Lem:JT} hold, and $D_{\mathsf{KL}}(P_{\mathcal{A}}||Q_{\mathcal{A}})\leq \sum_{i=1}^{q}I(X_i;\Sigma_{i-1})$, where the mutual information is with respect to $P$. Now, recalling that $\Sigma_{i-1}-(X_1,\ldots,X_{i-1})-X_i$ forms a Markov chain in this order, and that under $P$ we have that $(X_1,\ldots,\allowbreak X_{i})$ is a random process of sampling $i$ elements of $[N]$ uniformly without replacement, and that $I(\Sigma_i;X_1,\dots,X_i)\le H(\Sigma_i)\le s_i\log 2$ by the constraints on the internal memory, we can apply Corollary \ref{Cor:est} to obtain
\[
I(X_{i+1};\Sigma_i)\le (1+o(1))\cdot\frac{s_i\log 2+\log N}{N\log (N/q)} \; ,
\]
and Theorem \ref{Thm:Main} follows by summing over all $i\in[q-1]$. This settles the proof of Theorem \ref{Thm:Main}.

\section{Proof of Theorem \ref{Thm:Tight}}\label{Sec:Proof2}
Informally, given $s_1,\dots,s_q$ such that $s_i\ge 1$ for all $i\in[q]$ we construct an $(s_1,\dots,s_q)$-memory-bounded algorithm $\mathcal{A}$ that stores a list of values that it saw, where every new value is added to the list if the state size allows it. More formally, with a loss of at most $\log N$ bits per each $s_i$, we may assume that $s_i$ is of the form $1+ k_i\lceil\log_2 N\rceil$ for an integer $k_i$ for all $i\in[q]$. We remind that we assume that $s_{i+1}\le s_i+\lceil\log_2 N\rceil$ for all $i \in [q-1]$ and that $s_1\le \lceil\log_2 N\rceil$, thus it holds that $k_{i+1}\le k_i + 1$ for all $i \in [q-1]$ and it holds that $k_1\le 1$.
We also assume without loss of generality that $s_q=1$ (i.e., the final output of $\mathcal{A}$ is a single bit).
For $i\in[q]$ we define the computation $\mathcal{A}(i,\sigma_{i-1},x_i)$ as follows:
\begin{itemize}
	\item If $i=1$, output $\sigma_1=0$ or $\sigma_1=(0,x_1)$ according to whether $k_1=0$ or $k_1=1$, respectively.
	\item Else, if the first bit of $\sigma_{i-1}$ is $1$, output $\sigma_i=1^{s_i}$.
	\item Else, parse $\sigma_{i-1}=(0,y_1,\dots,y_{k_{i-1}})\in\{0,1\}\times[N]^{k_{i-1}}$.
	\item If $x_i\in\{y_1,\dots,y_{k_{i-1}}\}$, output $\sigma_i=1^{s_i}$.
	\item Else, if $k_i=k_{i-1}+1$, output $\sigma_i=(0,y_1,\dots,y_{k_{i-1}},x_i)$.
	\item Else, output $\sigma_i=(0,y_1,\dots,y_{k_{i}})$.
\end{itemize}
Note that for this algorithm $\mathcal{A}(X)\in\{0,1\}$ and that $P[\mathcal{A}(X)=0]=1$, so it holds that
\begin{align*}
D_{\mathsf{KL}}(P_{\mathcal{A}}||Q_{\mathcal{A}}) &= P[\mathcal{A}(X)=0]\log\left(\frac{P[\mathcal{A}(X)=0]}{Q[\mathcal{A}(X)=0]}\right) \\
&= -\log\left(Q[\mathcal{A}(X)=0]\right) \\
&= -\log\left(\prod_{i=1}^{q-1}\left(1-\frac{k_i}{N}\right)\right) \\
&= -\sum_{i=1}^{q-1}\log\left(1-\frac{k_i}{N}\right) \\
&\ge \sum_{i=1}^{q-1}\frac{k_i}{N} \\
&\ge \frac{\sum_{i=1}^{q-1}s_i-q\cdot(\log_2 N+1)}{N\log_2 N} \; ,
\end{align*}
and this settles the proof of Theorem \ref{Thm:Tight}.

\section{Proofs for the properties of $f$ and $\varphi$}\label{Sec:Prop}
In this section we give proofs for the properties of $f$ and $\varphi$ that we used.

We start by showing that $f$ is increasing over $[0,1/2 ]$. Indeed $f'(x)=\log(1-x)+1$, so $f'(x)>0$ as long as $x<1-1/e\approx0.6$.
Now, we show $\varphi$ is increasing over $[0,\log 2]$.
Recall that $\varphi(t)=f(h_2^{-1}(t))$, and the claim follows from the fact that $h_2^{-1}$ is increasing and $f$ is increasing over $[0,1/2]$.
Next, we show that $\varphi$ is convex by showing that its derivative is increasing. It holds that
\[
\varphi'(t)=\frac{f'(h_2^{-1}(t))}{(h_2)'(h_2^{-1}(t))} \; .
\]
Thus, $\varphi'(t)=p(h_2^{-1}(t))$ where
\[
p(x)=\frac{f'(x)}{(h_2)'(x)}=\frac{\log(1-x)+1}{\log(1-x)-\log x} \; .
\]
Computing the derivative, for $x\in(0,1/2)$ we get
\[
p'(x)=\frac{1-h(x)}{x(1-x)(\log(1-x)-\log x)^2} > 0 \; .
\]
So $p$ is increasing, thus $\varphi'$ is increasing and $\varphi$ is convex as claimed.
Finally, we show that for every $t\in[0,1]$ it holds that $f(t)\le\varphi(h_2(t))$. When $t\in[0,1/2]$ it simply holds that $\varphi(h_2(t))=f(h_2^{-1}(h_2(t)))=f(t)$. When $t\in(1/2,1]$, it holds that $\varphi(h_2(t))=f(h_2^{-1}(h_2(t)))=f(1-t)$, thus we need to show that $f(t)\ge f(1-t)$. Let $g(t)=f(t)-f(1-t)$.
Then, $g''(t)=1/t-1/(1-t)<0$ (when $t\in(1/2,1)$), so $g$ is concave over $[1/2,1]$. Together with the fact that $g(1/2)=g(1)=0$ we get that $g(t)\ge 0$ when $t\in[1/2,1]$.

\pagebreak
\bibliographystyle{IEEEtran}
\bibliography{Main}

\end{document}